\documentclass[conference]{IEEEtran}
\IEEEoverridecommandlockouts

\usepackage{cite}
\usepackage{amsmath,amssymb,amsfonts}
\usepackage{algorithmic}
\usepackage{graphicx}
\usepackage{physics}
\usepackage{tikz}
\usetikzlibrary{arrows.meta,positioning,fit,calc}
\usepackage{textcomp}
\usepackage{xcolor}
\usepackage{stfloats}
\usepackage[section]{placeins}

\usepackage{amsthm}

\theoremstyle{plain}
\newtheorem{theorem}{Theorem}

\theoremstyle{remark}
\newtheorem{remark}{Remark}

\def\BibTeX{{\rm B\kern-.05em{\sc i\kern-.025em b}\kern-.08em
    T\kern-.1667em\lower.7ex\hbox{E}\kern-.125emX}}

\begin{document}

\title{Channel Capacity under the Subtractive Dithered
Quantization Model\\}

\author{
\IEEEauthorblockN{Hossein Atrsaei\textsuperscript{*}, Mireille Sarkiss\textsuperscript{\textdagger}, Mich\`ele Wigger\textsuperscript{\S}}

\IEEEauthorblockA{\textsuperscript{*}LTCI, T\'el\'ecom Paris, Institut Polytechnique de Paris, Palaiseau, France\\
Email: hossein.atrsaei@ip-paris.fr}

\IEEEauthorblockA{\textsuperscript{\textdagger}SAMOVAR, T\'el\'ecom SudParis, Institut Polytechnique de Paris, Palaiseau, France\\
Email: mireille.sarkiss@telecom-sudparis.eu}

\IEEEauthorblockA{\textsuperscript{\S}Universit\'e Paris-Saclay, CNRS, CentraleSup\'elec, Laboratoire des Signaux et Syst\`emes, 91190 Gif-sur-Yvette, France\\
Email: michele.wigger@centralesupelec.fr}
}
\sloppy

\maketitle

\begin{abstract}
We study the capacity of an additive white Gaussian noise (AWGN) channel followed by a subtractive dithered uniform quantizer. Under the Schuchman conditions and with negligible overload probability, the system admits an additive-noise representation in which the effective noise is the sum of Gaussian and uniform components.

Capacity bounds are derived for this model where inputs are subject to an average-power constraint as well as a peak-amplitude constraint, where the latter accounts for the limited quantizer dynamic range. Specifically, a computable lower bound is obtained based on the entropy power inequality (EPI), using the maximum-entropy input under the above constraints. Tighter numerical lower bounds are  derived using discrete input constellations with finite mass points. Finally, an upper bound is obtained by exploiting the maximum-entropy property of the Gaussian distribution for a given variance.

Numerical results show that, for a \(K\)-level quantizer, discrete constellations with \(K\) mass points already achieve near-optimal rates among the tested families. Moreover, our upper bound is close to the lower bounds in the moderate signal-to-noise ratio (SNR) regime; thus it provides a simple capacity approximation in this regime. 

\end{abstract}

\begin{IEEEkeywords}
Channel capacity, dithered quantizer, analog-to-digital converter (ADC), amplitude-limited inputs. 
\end{IEEEkeywords}


\section{Introduction}

Modern communication systems rely on digital signal processing, where analog-to-digital converters (ADCs) convert the received analog waveform into a finite-resolution digital signal from which the transmitted message is reconstructed and decoded. Low-precision ADCs, operating with 1–3 bits per sample, have attracted significant interest due to their reduced power consumption and hardware complexity~\cite{walden1999, schreier2005}, particularly in millimeter-wave (mmWave) and massive multiple-input multiple-output (MIMO) systems~\cite{liang2016}.

Understanding the fundamental information-theoretic limits imposed by coarse quantization on channel capacity is therefore a problem of both theoretical and practical importance. Closed-form expressions for channel capacity are generally intractable, and the literature has focused on two main directions. One line of research studies analytically tractable cases, most notably the one-bit ADC model. For an additive white Gaussian noise (AWGN) channel followed by a sign quantizer, Singh, Dabeer, and Madhow showed that antipodal signaling achieves capacity under an average-power constraint~\cite{singh2009limits}. Another line of work derives computable bounds for more general quantization models. For example, \cite{dutta2020} treats quantization as additive Gaussian noise in the large-system limit. Such models, including the additive quantization noise model (AQNM)~\cite{fletcher2007,orhan2015}, are approximate and may become inaccurate in the low-resolution regime.

Smith~\cite{smith1971} established that, for an AWGN channel under simultaneous peak-amplitude and average-power constraints on the channel input, the capacity-achieving input distribution is discrete with a finite number of mass points. Extensions of this result to broader channel models suggest that discreteness persists under fairly general conditions on the noise distribution~\cite{tchamkerten2004discreteness, fahs2018support, elmoslimany2018discreteness}.

For quantized-output channels, Singh et al.~\cite{singh2009limits} further showed that an average-power constraint induces bounded input support and that, for a quantizer with \(K\) output levels, a symmetric input with at most \(K+1\) mass points suffices.

In this paper, we adopt a model based on \emph{subtractive dithered quantization}. A dither is added before quantization and subtracted afterward, which reduces the
input-output relation to \(Y=X+\bar W\), where \(\bar W\) is the sum of Gaussian channel noise and uniform quantization noise. This holds under the Schuchman conditions~\cite{gray1993dithered,schuchman1964}, provided that the dither is independent of the input and properly distributed, and that the quantizer operates without overload. Under these conditions, the quantization error is uniformly distributed and independent of the input.

To ensure that the quantizer operates in its linear regime \([-\gamma, \gamma]\) with high probability, we impose a \emph{peak-amplitude constraint} \(|X| \le A\), so that the overload probability is at most \(\varepsilon\), for a small constant \(\varepsilon > 0\). 

Under this framework, we derive bounds on the capacity. Our upper bound follows from the variance of the output signal together with the maximum-entropy property of the Gaussian distribution under a fixed second moment. 
We present the following lower bounds:
\begin{itemize}
    \item an entropy-power-inequality (EPI)-based bound, derived from the maximum-entropy input under the peak-amplitude and average-power constraints; and
    \item discrete-input lower bounds obtained by numerically optimizing symmetric constellations with \(N=2\), \(3\), \(4\), \(5\), \(8\), and \(9\) mass points.
\end{itemize}
Numerical results show that the proposed upper and lower bounds are tight at moderate signal-to-noise ratios (SNRs). The EPI lower bound and the variance-based upper bound therefore provide simple numerical approximations to capacity in this regime. Across all SNR regimes, our bounds suggest that, for a $K$-level quantizer, an $N=K$-point input distribution achieves a rate close to capacity; further increasing the number of mass points hardly improves the achievable rates. At low SNR, two-point constellations seem sufficient. 

Note that the effective noise in our channel (consisting of Gaussian and uniform noise components) has a continuous density with finite variance and satisfies regularity conditions as stated in~\cite{smith1971, fahs2018support, tchamkerten2004discreteness}. By these results, the capacity-achieving input is discrete with finite support at all SNR levels. 

\section{System Model}

We consider a real, discrete-time AWGN channel
\begin{equation}
    Z = X + W,
    \label{eq:channel_pre_quant}
\end{equation}
where \(W \sim \mathcal{N}(0, \sigma^2)\) is independent of \(X\).

The receiver employs a \(K\)-level subtractive dithered uniform quantizer with dynamic range \([-\gamma,\gamma]\) and step size $\Delta = {2\gamma}/{K}$. Here, \(K\) denotes the number of quantizer output levels; hence, a \(b\)-bit ADC has \(K=2^b\) output levels. For subtractive dithering, an independent dither \(D \sim \mathcal{U}(-\Delta/2,\Delta/2)\) is added before quantization and subtracted afterward, as illustrated in Fig.~\ref{fig:system_model}.
\begin{figure*}[t]
\centering
\begin{tikzpicture}[
    >=Latex,
    font=\footnotesize,
    node distance=0.55cm,
    block/.style={draw, rectangle, minimum height=5mm, minimum width=10mm, align=center},
    sum/.style={draw, circle, inner sep=0pt, minimum size=4.5mm}
]
\node (m) {$M$};
\node[block, right=of m]   (enc)  {Encoder};
\node[sum,   right=of enc] (sumw) {$+$};
\node[sum,   right=of sumw](sumd) {$+$};
\node[block, right=of sumd](adc)  {$q_{K}(\cdot)$};
\node[sum,   right=of adc] (subd) {$-$};
\node[block, right=of subd](dec)  {Decoder};
\node[right=of dec] (mhat) {$\hat{M}$};

\draw[->] (m)    -- (enc);
\draw[->] (enc)  -- node[above] {$X$} (sumw);
\draw[->] (sumw) -- node[above] {$Z$} (sumd);
\draw[->] (sumd) -- node[above] {$U$} (adc);
\draw[->] (adc)  -- (subd);
\draw[->] (subd) -- node[above] {$Y$} (dec);
\draw[->] (dec)  -- (mhat);

\node[above=0.55cm of sumw] (w) {$W$};
\draw[->] (w) -- (sumw);

\node[below=0.45cm of sumd] (d1) {$D$};
\node[below=0.45cm of subd] (d2) {$D$};
\draw[->] (d1) -- (sumd);
\draw[->] (d2) -- (subd);

\node[draw, dashed, inner sep=5pt,
      fit=(sumd)(adc)(subd)(d1)(d2)] (qbox) {};
\node[below=3pt of qbox, font=\scriptsize]
      {Subtractive Dithered Quantizer};
\end{tikzpicture}
\caption{Block diagram of the communication system model with subtractive
dithered quantization at the receiver.}
\label{fig:system_model}
\end{figure*}
For quantization, the interval \([-\gamma,\gamma]\) is partitioned into \(K\) uniform subintervals of width \(\Delta\), whose midpoints define the reconstruction levels, with saturation at \(\pm(\gamma-\Delta/2)\) outside the range.

Under the Schuchman conditions for subtractive dithering~\cite{gray1993dithered,schuchman1964}, and assuming that the probability of overload outside \([-\gamma,\gamma]\) is sufficiently small, the overall input–output relation can be modeled as
\begin{equation}
    Y = X + \bar{W}, \label{eq:additive_model}
\end{equation}
where the effective noise is 
\begin{equation}\bar{W} = W + W_q,\label{eq:Wbar}
\end{equation} with \(W_q \sim \mathcal{U}(-\Delta/2,\,\Delta/2)\) independent of both \(X\) and \(W\). The probability density function (PDF) and cumulative distribution function (CDF) of \(\bar{W}\), denoted by \(f_{\bar{W}}(\cdot)\) and \(F_{\bar{W}}(\cdot)\), are provided in Appendix~\ref{app:noise_dist}.

\subsection{Input Constraints}

Let \(U=X+W+D\) denote the quantizer input. The restriction of
the overload probability to be small directly limits the admissible
channel inputs. We therefore impose the pointwise overload constraint
\begin{equation}
    \Pr\bigl(|U|>\gamma \mid X=x\bigr)\leq \varepsilon,
    \qquad \forall\,x\in\operatorname{supp}(P_X),
    \label{eq:overload_pointwise}
\end{equation}
for a given tolerance \(0<\varepsilon<1\). Since \(D\) has
the same distribution as \(W_q\), the overload probability is
\begin{equation}
    \Pr\bigl(|U|>\gamma \mid X=x\bigr)
    =
    1-\bigl[
        F_{\bar W}(\gamma-x)-F_{\bar W}(-\gamma-x)
    \bigr].
    \label{eq:overload_prob}
\end{equation}
By the symmetry and strict unimodality of \(f_{\bar W}\), this probability is continuous and strictly increasing in \(|x|\). Consequently, the supremum over the support is attained at the boundary \(x = A\). It therefore suffices to enforce the constraint at \(x = A\), which results in the implicit equation
\begin{equation}
    1-\bigl[
        F_{\bar W}(\gamma-A)-F_{\bar W}(-\gamma-A)
    \bigr]
    =
    \varepsilon.
    \label{eq:A_exact}
\end{equation}
Accordingly, we consider input distributions \(P_X\) satisfying the
\emph{peak-amplitude constraint}
\begin{equation}
    |X|\leq A
    \label{eq:peak_constraint}
\end{equation}
and the \emph{average-power constraint}
\begin{equation}
    \mathbb{E}[X^2]\leq P_0.
    \label{eq:avg_constraint}
\end{equation}

\begin{remark}
Let \( \varepsilon_t \triangleq 2\bigl(1-F_{\bar W}(\gamma)\bigr) \) denote the overload probability at \(A=0\). Since the left-hand
side of~\eqref{eq:A_exact} is continuous and strictly increasing
in \(A\), a unique positive solution exists if and only if
\(\varepsilon>\varepsilon_t\). Moreover,
\[
\Pr\bigl(|U|>\gamma\mid X=\gamma\bigr)
=
\frac{1}{2}+F_{\bar W}(-2\gamma)
>
\frac{1}{2},
\]
where \(F_{\bar W}(0)=1/2\) follows from the symmetry of
\(f_{\bar W}\). Hence, for
\(\varepsilon_t<\varepsilon\leq 1/2\), the unique solution lies
in \((0,\gamma)\) and is computed numerically by bisection over
\([0,\gamma]\).
\end{remark}

\subsection{Problem Formulation}

We study the channel capacity under the input constraints in \eqref{eq:peak_constraint} and \eqref{eq:avg_constraint}. The capacity, denoted \(C(A,P_0)\), is defined as
\begin{equation}
    C(A,P_0) = \sup_{P_X \in \mathcal{F}} I(X;Y),
    \label{eq:capacity_def}
\end{equation}
where the feasible set is given by
\begin{equation}
    \mathcal{F} = \left\{ P_X : \mathbb{E}[X^2] \le P_0,\; |X| \le A \right\},
\end{equation}
and \(I(X;Y)\) denotes the mutual information between the channel input and output.

This class of capacity problems for additive-noise channels under input constraints has been studied extensively (see, e.g.,~\cite{smith1971,fahs2018support}). However, explicit characterizations are generally unavailable when the additive noise is non-Gaussian and the amplitude constraint is induced by the finite dynamic range of the quantizer.

Under the additive-noise model in~\eqref{eq:additive_model}, the mutual information can be expressed as
\begin{equation}
    I(X;Y) = h(Y) - h(\bar{W}),
    \label{eq:MI_entropy}
\end{equation}
where \(h(\cdot)\) denotes differential entropy.

By symmetry of the noise and the constraint set, a capacity-achieving input can be chosen symmetric, which can be proved analogously to~\cite[Lemma~2]{singh2009limits}. Therefore, we restrict attention to symmetric inputs.

\section{Theoretical Results}
\label{sec:theoretical}

Finding a closed-form expression for the capacity is intractable. We derive upper and lower bounds on \(C(A,P_0)\).

\subsection{Upper Bounds on Channel Capacity}

Because a Gaussian distribution maximizes differential entropy for a given variance, applying this property to~\eqref{eq:MI_entropy} yields the following upper bound.

\begin{theorem}[Variance-Based Upper Bound]
\label{thm:variance_ub}
The capacity of our model is upper bounded as
\begin{IEEEeqnarray}{rCl}
    C(A,P_0)
    &\leq&
    \frac{1}{2}
    \log\!\Bigg(
    2\pi e
    \Big(
    \min\{A^2,P_0\}
    +\sigma^2+\frac{\Delta^2}{12}
    \Big)
    \Bigg)
    \nonumber\\[-0.5ex]
    &&-h(\bar{W}),
\label{eq:capacity_ub_variance_main}
\end{IEEEeqnarray}
where \(h(\bar{W}):= -\int_{-\infty}^{\infty}
f_{\bar{W}}(w)\log f_{\bar{W}}(w)\,dw \) is the noise differential entropy, evaluated numerically.
\end{theorem}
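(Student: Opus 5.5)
We fix an arbitrary feasible input \(P_X\in\mathcal{F}\) and bound \(I(X;Y)\) uniformly over \(\mathcal{F}\); taking the supremum then gives the theorem. The starting point is the decomposition \eqref{eq:MI_entropy}, \(I(X;Y)=h(Y)-h(\bar W)\), which holds because \(\bar W\) is independent of \(X\) and has a density. Since \(h(\bar W)\) does not depend on \(P_X\), only \(h(Y)\) needs to be bounded.

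First we bound the variance of \(Y\). The components \(X\), \(W\) and \(W_q\) in \eqref{eq:additive_model}--\eqref{eq:Wbar} are mutually independent. Hence
\[
\operatorname{Var}(Y)=\operatorname{Var}(X)+\sigma^2+\frac{\Delta^2}{12},
\]
using \(\operatorname{Var}(W_q)=\Delta^2/12\) for \(W_q\sim\mathcal{U}(-\Delta/2,\Delta/2)\). Next, \(\operatorname{Var}(X)\le \mathbb{E}[X^2]\). The average-power constraint \eqref{eq:avg_constraint} gives \(\mathbb{E}[X^2]\le P_0\), and the peak constraint \eqref{eq:peak_constraint} gives \(X^2\le A^2\) almost surely, so \(\mathbb{E}[X^2]\le A^2\). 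Together,
\[
\operatorname{Var}(Y)\le \min\{A^2,P_0\}+\sigma^2+\frac{\Delta^2}{12}.
\]
No symmetry assumption is needed here, because we use the second moment rather than the mean.

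Second, we apply the maximum-entropy property. Among all real random variables with variance at most \(v\), the Gaussian maximizes differential entropy: \(h(Y)\le \tfrac12\log(2\pi e\,\operatorname{Var}(Y))\). Since \(\log\) is monotone, substituting the variance bound and subtracting \(h(\bar W)\) yields \eqref{eq:capacity_ub_variance_main} for every \(P_X\in\mathcal{F}\). Taking the supremum over \(\mathcal{F}\) in \eqref{eq:capacity_def} completes the argument.

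The only point needing care is that \(h(Y)\) and \(h(\bar W)\) are well defined and finite, so that \eqref{eq:MI_entropy} is a legitimate difference rather than \(\infty-\infty\). This holds because \(f_{\bar W}\) is the convolution of a Gaussian with a uniform density. That density is bounded, which makes \(h(\bar W)>-\infty\), and it has finite variance, which makes \(h(\bar W)<\infty\). Likewise \(Y\) has the bounded density \(\mathbb{E}[f_{\bar W}(y-X)]\) and finite variance, so \(h(Y)\) is finite. With these checks, the rest is routine.
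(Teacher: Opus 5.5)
Your proposal is correct and matches the paper's argument: write \(I(X;Y)=h(Y)-h(\bar W)\), bound \(\operatorname{Var}(Y)\le \min\{A^2,P_0\}+\sigma^2+\Delta^2/12\) using independence of \(X\), \(W\), \(W_q\), and apply the Gaussian maximum-entropy bound to \(h(Y)\). Your check that \(h(Y)\) and \(h(\bar W)\) are finite is a small piece of rigor that the paper leaves implicit.
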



In the low-power regime \(P_0 \to 0\), the upper bound converges to \(\frac{1}{2}\log\!\left(2\pi e\!\left(\sigma^2+\frac{\Delta^2}{12}\right)\right)
- h(\bar{W})\), which is strictly positive; hence, the bound is loose in this regime.

In the moderate-power regime, the upper bound is observed to be close to the actual capacity and provides a simple estimate in this range.

In the high-power regime $P_0\to \infty$, the upper bound saturates at  $\frac{1}{2}\log \left( 2\pi e \left(A^2+\sigma^2+\frac{\Delta^2}{12}\right)\right)- h(\bar{W})$, proving that the capacity also saturates at a finite value.

A second upper bound on \(C(A,P_0)\) is obtained by relaxing the peak-amplitude constraint~\eqref{eq:peak_constraint} and removing quantization. This reduces the model to an unquantized-AWGN channel subject only to the average-power constraint. Therefore,
\begin{equation}
    C(A,P_0)
    \leq
    \frac{1}{2}
    \log\!\left(
    1+\frac{P_0}{\sigma^2}
    \right).
\label{eq:capacity_ub_awgn}
\end{equation}
Unlike the variance-based bound, this one vanishes as \(P_0 \to 0\) and is therefore tighter at low SNR, though it does not capture the high-SNR saturation.

\subsection{Lower Bounds}

\subsubsection{An EPI-Based Lower Bound}
The EPI yields the following bound.

\begin{theorem}[EPI-Based Lower Bound]
\label{thm:epi_lb}
The capacity of our model is lower bounded as
\begin{equation}
    C(A,P_0)
    \ge
    \frac{1}{2}
    \log\!\left(
    e^{2h_{\max}(A,P_0)} + e^{2h(\bar{W})}
    \right)
    - h(\bar{W}),
    \label{eq:capacity_LB_EPI_main}
\end{equation}
where
\begin{equation}
    h_{\max}(A,P_0)
    :=
    \sup_{P_X \in \mathcal{F}} h(X).
    \label{eq:hmax}
\end{equation}

The maximization in~\eqref{eq:hmax} leads to the following expression:
\begin{equation}
    h_{\max}(A,P_0)
    =
    \begin{cases}
    \log(2A), & P_0 \ge \dfrac{A^2}{3}, \\[1ex]
    -\log c + \mu^\star P_0, & P_0 < \dfrac{A^2}{3},
    \end{cases}\label{eq:hmax2}
\end{equation}
with
\begin{equation}
    c =
    \frac{\sqrt{\mu^\star}}
    {\sqrt{\pi}\Bigl(1 - 2Q\!\bigl(\sqrt{2\mu^\star}\,A\bigr)\Bigr)},
    \label{eq:c_def}
\end{equation}
where \(\mu^\star > 0\) solves
\begin{equation}
    \frac{1}{2\mu^{\star}}
    -
    \frac{A e^{-\mu^{\star} A^2}}
    {\sqrt{\pi \mu^{\star}}\Bigl(1 - 2Q\!\bigl(\sqrt{2\mu^{\star}}\,A\bigr)\Bigr)}
    =
    P_0.
    \label{eq:mu_star_theorem}
    \end{equation}

\end{theorem}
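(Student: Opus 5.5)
The plan has two parts. First we derive the EPI bound for an arbitrary feasible input with a density. Then we solve the maximum-entropy problem \eqref{eq:hmax} explicitly.

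\textbf{Part 1: the EPI bound.} Fix any $P_X\in\mathcal{F}$ that has a density. Since $X$ and $\bar W$ are independent, the EPI gives
\[
e^{2h(Y)}\ge e^{2h(X)}+e^{2h(\bar W)}.
\]
Substituting this into \eqref{eq:MI_entropy} yields
\[
I(X;Y)\ge \tfrac12\log\bigl(e^{2h(X)}+e^{2h(\bar W)}\bigr)-h(\bar W).
\]
The right-hand side is increasing in $h(X)$. Taking the supremum over $\mathcal{F}$ on both sides, and using \eqref{eq:capacity_def}, gives \eqref{eq:capacity_LB_EPI_main}. The supremum is attained (shown below), so no limiting argument is needed. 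If it were not attained, continuity of $t\mapsto \log(e^{2t}+c)$ would handle the approach to the supremum.

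\textbf{Part 2: maximum-entropy problem, case $P_0\ge A^2/3$.} We maximize $h(X)$ over densities supported on $[-A,A]$ with $\mathbb{E}[X^2]\le P_0$. Without the power constraint, the uniform density on $[-A,A]$ maximizes entropy, with $h=\log(2A)$. Its second moment is $A^2/3$. So if $P_0\ge A^2/3$, the uniform density is feasible and optimal, which gives the first branch of \eqref{eq:hmax2}.

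\textbf{Part 2, case $P_0<A^2/3$.} Here we use the Gibbs inequality with a truncated Gaussian candidate
\[
f^\star(x)=c\,e^{-\mu^\star x^2}\mathbf{1}_{|x|\le A}, \qquad \mu^\star>0.
\]
For any feasible $f$,
\[
h(f)\le -\int f\log f^\star=-\log c+\mu^\star\mathbb{E}_f[X^2]\le -\log c+\mu^\star P_0.
\]
Equality holds when $f=f^\star$ and $\mathbb{E}_{f^\star}[X^2]=P_0$. This gives the second branch of \eqref{eq:hmax2}, provided such a $\mu^\star$ exists. The remaining work is routine calculus:
\begin{itemize}
\item The normalization $\int_{-A}^{A}e^{-\mu x^2}dx=\sqrt{\pi/\mu}\,\bigl(1-2Q(\sqrt{2\mu}A)\bigr)$ gives \eqref{eq:c_def}.
\item Differentiating $\log\int_{-A}^{A} e^{-\mu x^2}dx$ with respect to $\mu$, or integrating by parts, gives
\[
\mathbb{E}_{f^\star}[X^2]=\frac{1}{2\mu}-\frac{A e^{-\mu A^2}}{\sqrt{\pi\mu}\,\bigl(1-2Q(\sqrt{2\mu}A)\bigr)},
\]
so the moment condition $\mathbb{E}_{f^\star}[X^2]=P_0$ is exactly \eqref{eq:mu_star_theorem}.
\end{itemize}

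\textbf{Main obstacle: existence and uniqueness of $\mu^\star$.} Define $m(\mu)=\mathbb{E}_{f^\star}[X^2]$, the second moment of the truncated Gaussian with parameter $\mu$. I will show three properties:
\begin{itemize}
\item \emph{Continuity} on $\mu>0$, which is clear from the formula.
\item \emph{Strict monotonicity.} We have $m'(\mu)=-\mathrm{Var}_{f^\star}(X^2)<0$, because this is an exponential family in the statistic $X^2$ and $X^2$ is non-degenerate on $[-A,A]$.
\item \emph{Limits.} As $\mu\to0^+$, $f^\star$ tends to the uniform density, so $m(\mu)\to A^2/3$. As $\mu\to\infty$, $m(\mu)\sim 1/(2\mu)\to0$.
\end{itemize}
By the intermediate value theorem, every $P_0\in(0,A^2/3)$ has a unique solution $\mu^\star>0$ of \eqref{eq:mu_star_theorem}. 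This completes the characterization of $h_{\max}$, and with it the theorem.
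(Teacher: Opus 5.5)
Your proof is correct and follows the same route as the paper: apply the EPI to $I(X;Y)=h(X+\bar W)-h(\bar W)$, take the supremum over $\mathcal{F}$, and identify the maximum-entropy input as uniform on $[-A,A]$ when $P_0\ge A^2/3$ and as the truncated Gaussian $c\,e^{-\mu^\star x^2}$ otherwise. The paper cites Cover--Thomas for the max-entropy form and only asserts that $\mu^\star$ is unique; you fill in both points correctly, via the Gibbs-inequality argument and via the monotonicity $m'(\mu)=-\mathrm{Var}_{f^\star}(X^2)<0$ together with the limits $A^2/3$ and $0$.
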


\begin{IEEEproof}
See Appendix~\ref{app:epi_lb}.
\end{IEEEproof}

In the low-power regime \(P_0\to0\), we have \(\mu^\star\approx1/(2P_0)\), and hence \(\mu^\star P_0\approx1/2\). Moreover, since \(\mu^\star\to\infty\),
\(c\approx1/\sqrt{2\pi P_0}\). 
Substituting these approximations into \eqref{eq:hmax2}
and \eqref{eq:capacity_LB_EPI_main} gives the low-power approximation
\begin{equation}
    C(A,P_0)
    \gtrsim
    \frac{1}{2}
    \log\!\left(1 + 2\pi e\,P_0\, e^{-2h(\bar{W})}\right).
\end{equation}
This lower bound approaches zero linearly with \(P_0\) (since \(\log(1+x)\approx x\) for small \(x\)), similarly to the unquantized-AWGN upper bound in~\eqref{eq:capacity_ub_awgn}; see the top panel of Fig.~\ref{fig:bounds_lowhigh_K4}.

In the moderate-SNR regime, our numerical results show that the EPI lower bound is close to the variance-based upper bound and thus provides a tractable approximation of capacity.

At high power \(P_0 \geq A^2/3\), the EPI-based lower bound saturates at $\frac{1}{2} \log\left(4A^2 + e^{2h(\bar{W})}\right)-h(\bar{W})$. The variance-based upper bound also saturates, although at a higher value; see the numerical results in the next section.

\subsubsection{Discrete-Input Lower Bounds}
 
We obtain improved lower bounds by restricting the input distribution to symmetric finite-support constellations with \(N\) points. For \(N=2\), we use \(\mathcal X=\{-a,a\}\). For \(N\ge3\), we consider
\begin{equation}
    \mathcal{X} =
    \begin{cases}
    \{0,\pm a\beta_1,\ldots,\pm a\beta_{\frac{N-3}{2}},\pm a\},
        & N \text{ odd},\\
    \{\pm a\beta_1,\ldots,\pm a\beta_{\frac{N-2}{2}},\pm a\},
        & N \text{ even},
    \end{cases}
    \label{eq:constellation_general}
\end{equation}
where \(0\le \beta_1\le\cdots\le1\), \(0 < a\le A\), and the probabilities are symmetric.

For any such constellation, with support points \(\{x_n\}_{n=1}^N\) and probabilities \(\{p_n\}_{n=1}^N\), the power constraint becomes
\begin{equation}
    \mathbb{E}[X^2]
    = a^2\Psi(\boldsymbol{\beta},\mathbf p),
    \qquad
    \Psi(\boldsymbol{\beta},\mathbf p)
    := \sum_{n=1}^N p_n \left(\frac{x_n}{a}\right)^2.
    \label{eq:psi_def}
\end{equation}
Hence, for fixed \((\boldsymbol{\beta},\mathbf p)\), the admissible amplitude satisfies
\begin{equation}
    a \le
    \min\left\{
        \sqrt{\frac{P_0}{\Psi(\boldsymbol{\beta},\mathbf p)}}, A
    \right\}.
    \label{eq:amplitude_bound}
\end{equation}
We evaluate 2-, 3-, 4-, 5-, 8-, and 9-point constellations as 
special cases of~\eqref{eq:constellation_general}. For each family, \(I(X;Y)\) is maximized numerically over \((\boldsymbol{\beta},\mathbf p,a)\) subject to the peak-amplitude and average-power constraints, leading to a sequence of increasingly tight lower bounds on the capacity \(C(A,P_0)\).

\begin{remark}
The effective noise $\bar{W}$ has a continuous density with finite variance and admits an analytic extension. Under such regularity conditions, 
the capacity-achieving input distribution is discrete with finite support; see classical results for amplitude-constrained channels~\cite{smith1971, fahs2018support, tchamkerten2004discreteness}.
\end{remark}

\section{Numerical Results}

We evaluate the bounds numerically for \(\sigma^2 = 10^{-2}\),
\(\gamma = 2\), \(\varepsilon = 10^{-4}\), and different values of \(K\), with $\mathrm{SNR} \triangleq P_0/\sigma^2$. The amplitude constraint \(A\) is obtained from \eqref{eq:A_exact}. For \(K \in \{2, 4, 8\}\), \(A\) is weakly affected by \(\varepsilon\) and remains nearly unchanged, for example when \(\varepsilon\) is reduced to \(10^{-6}\).


The choice $\gamma=2$ is motivated by practical considerations as explained in~\cite{walden1999, schreier2005}. To illustrate the theoretical impact of the dynamic range, we also present results for \(\gamma \in \{5, 10\}\).

\begin{figure}[!t]
    \centering
    \includegraphics[width=\columnwidth]{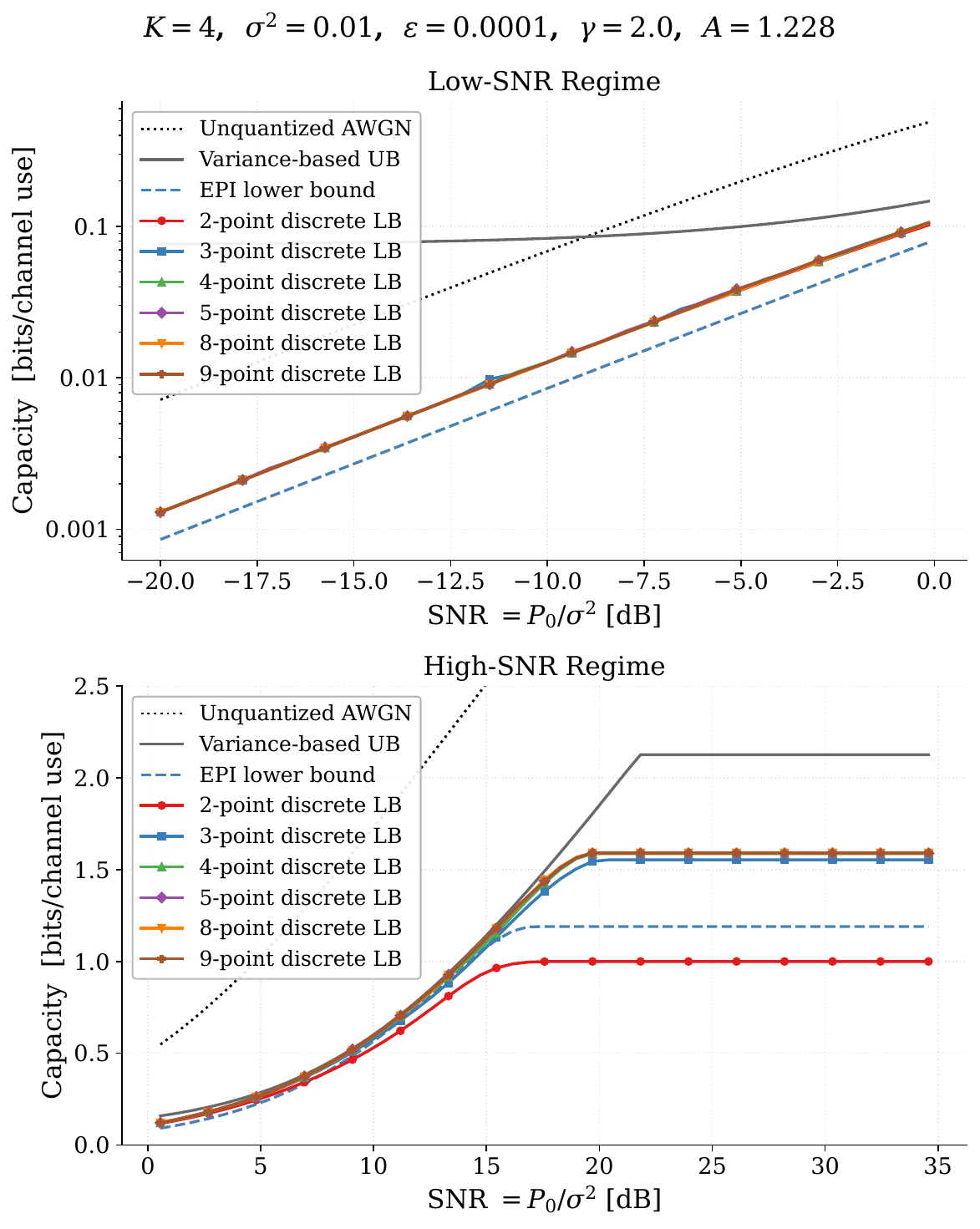}
    \caption{%
        Capacity bounds for a quantizer with \(K=4\) output levels. \textit{Top}: low-SNR regime on a logarithmic scale.
        \textit{Bottom}: high-SNR regime.
    }
    \label{fig:bounds_lowhigh_K4}
\end{figure}
 
\subsection{Capacity Bounds for \(K = 4\)}
 
Fig.~\ref{fig:bounds_lowhigh_K4} shows the capacity bounds derived in the previous section for $K = 4$.

In the moderate- and high-SNR regimes, the numerical results confirm the analytical observations of Section~\ref{sec:theoretical}. Our variance-based upper bound significantly improves over the unquantized-AWGN upper bound, and seems tight in the moderate-SNR regime; its high-SNR saturation level, however, lies considerably above all lower bounds.

The EPI lower bound is close to the variance-based upper bound at moderate SNR but degrades relative to the discrete-input lower bounds at high SNR. Among the discrete-input distributions, the lower bound increases as \(N\) grows from \(2\) to \(4\). Increasing \(N\) beyond \(K = 4\) produces negligible additional gain for these parameters.

As the SNR approaches zero, the variance-based upper bound converges to a strictly positive value and is therefore looser than the unquantized Gaussian capacity in this regime. A gap between the best upper and lower bounds persists throughout. In the low-SNR regime, the two-point lower bound performs best, and increasing \(N\) beyond 2 does not lead to any noticeable improvement. It also strictly outperforms the EPI lower bound.

\begin{figure}[!t]
    \centering
    \includegraphics[width=\columnwidth]{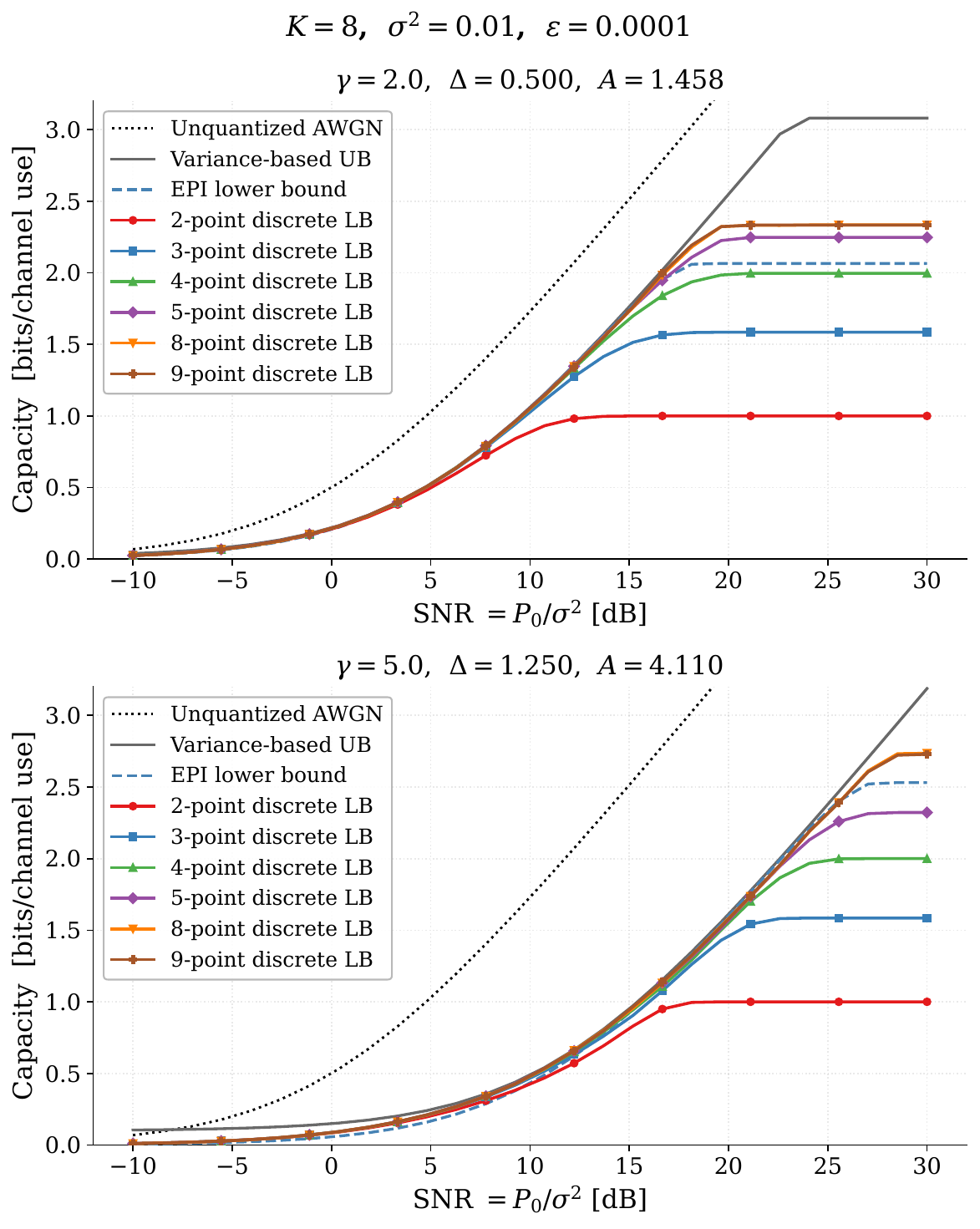}
    \caption{%
        Capacity bounds for a quantizer with \(K=8\) output levels and two values of the ADC dynamic range:
        \(\gamma = 2\) (top) and \(\gamma = 5\) (bottom).
    }
    \label{fig:bounds_full_K8}
\end{figure}
 
\subsection{Capacity Bounds for \(K = 8\)}
\label{sec:K8_bounds} 
Fig.~\ref{fig:bounds_full_K8} shows the capacity bounds for \(K = 8\) and \(\gamma \in \{2,5\}\). The behavior is similar to that observed for \(K=4\), and, for the same value of $\gamma$, the evaluated bounds are generally higher for $K=8$. As observed for \(K=4\), the 9-point constellation yields no 
noticeable improvement over the 8-point one.

Our bounds further allow us to conclude that larger values of \(\gamma\) reduce the capacity in the low-SNR regime but increase it in the high-SNR regime. This occurs because increasing \(\gamma\) increases the quantization-noise variance, degrading performance at low SNR, while also increasing the admissible amplitude \(A\), which becomes the limiting factor at high SNR.

Consistently, for $\gamma=5$, each $N$-point lower bound   approaches $\log_2 N$~bits/channel use in the high-SNR regime. This occurs because a larger admissible amplitude $A$ allows the $N$ constellation points to be well separated, which results in a small conditional entropy $H(X|Y)$. Furthermore, at high SNR, the input distribution can be chosen nearly uniform over the constellation, so that $I(X;Y)= H(X)-H(X|Y)\approx H(X)\approx\log_2(N)$.

\begin{figure}[t]
    \centering
    \includegraphics[width=\columnwidth]{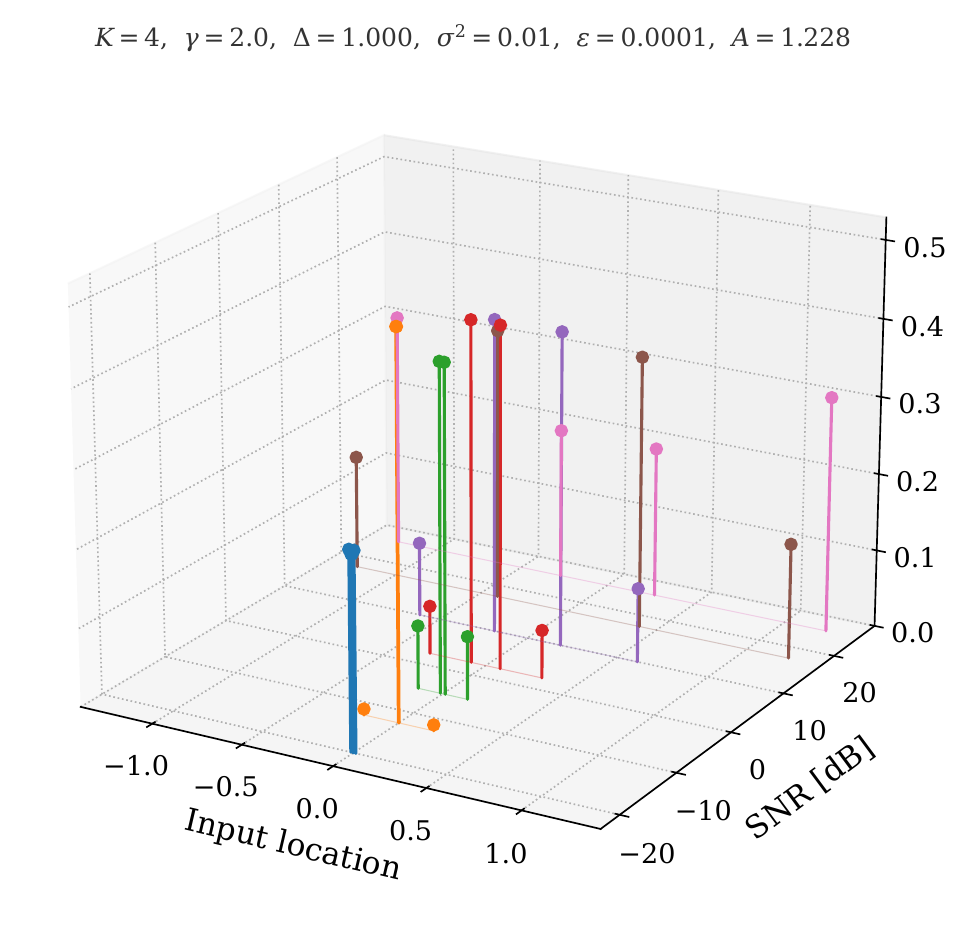}
    \caption{%
      The optimal 4-point-input-distribution, as a function of SNR.
        Each stem represents a mass point, and its height indicates the corresponding probability.
    }
    \label{fig:dist3d_K4}
\end{figure}
 
\subsection{The Maximizing Input Distribution as a Function of SNR}

Fig.~\ref{fig:dist3d_K4} shows the optimal 4-point-input-distribution as a function of SNR. As SNR increases, the distribution evolves from one concentrated near the origin to one that allocates mass approximately uniformly across the support $|x|\leq A$. This reflects the transition from a \emph{noise-limited regime}, where Gaussian-like inputs are favorable, to a \emph{constraint-limited regime}, where maximizing the input entropy under the constraint $|X|\le A$ becomes dominant.

\begin{figure}[t]
    \centering
    \includegraphics[width=\columnwidth]{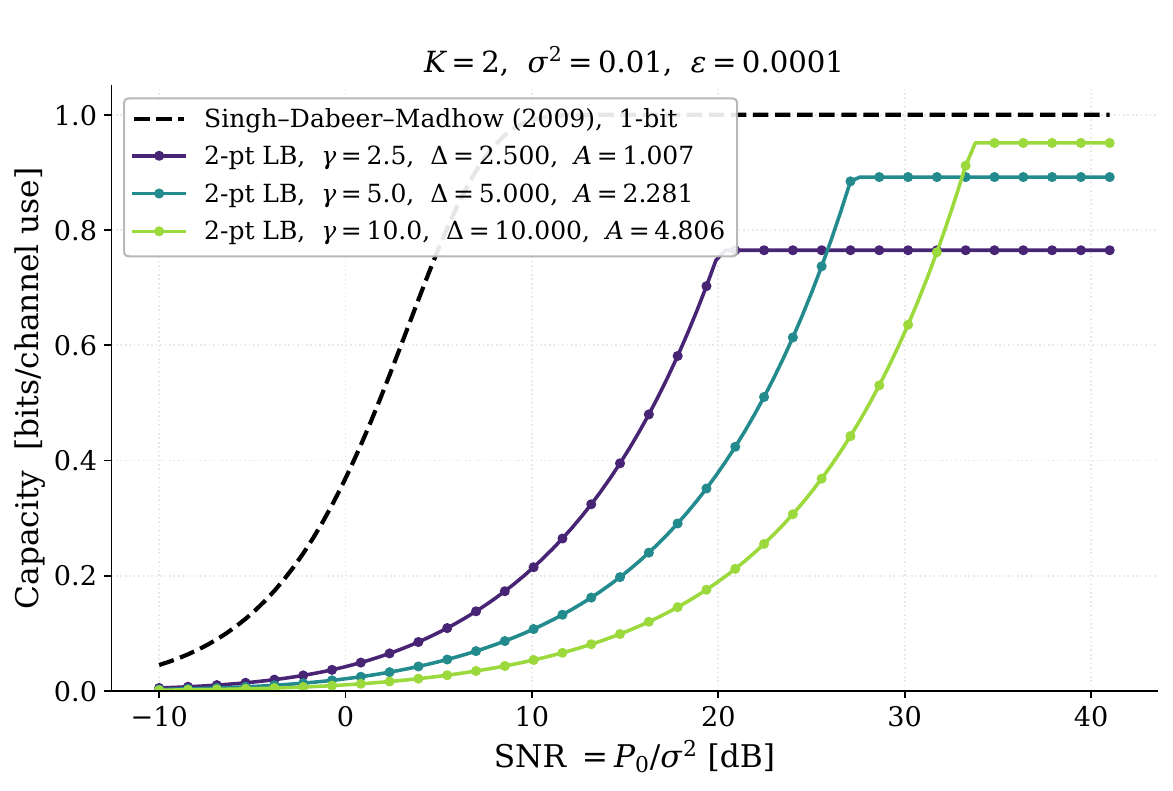}
    \caption{%
    Comparison between the capacity of the one-bit sign-quantizer model in~\cite{singh2009limits} and the 2-point discrete-input lower bound for the \(K=2\) subtractive-dithered model at different values of \(\gamma\).
    }
    \label{fig:singh_comparison}
\end{figure}

\subsection{Comparison with~\cite{singh2009limits} (\(K = 2\))}
Fig.~\ref{fig:singh_comparison} compares the 2-point achievable rate of the \(K=2\) subtractive-dithered model with the one-bit sign-quantizer capacity in~\cite{singh2009limits}. In our model, the peak-amplitude constraint controls overload and is set by the quantizer range and the overload tolerance. The model in~\cite{singh2009limits} imposes only an average-power constraint and does not consider a finite dynamic range.

At low SNR, the gap between the two models is mainly due to the dither noise. At high SNR, increasing \(\gamma\) relaxes the peak-amplitude constraint in our model. The achievable rate then approaches \(1\) bit/channel use as the two antipodal points become well separated, matching the capacity of the sign-quantizer model in~\cite{singh2009limits}. However, for small values of \(\gamma\), the gap remains substantial and reflects the stringent peak-amplitude constraint imposed by the finite dynamic range in our model.


 
 

\section{Conclusions}

This paper studied the capacity of a real AWGN channel with a \(K\)-level subtractive dithered quantizer. By the dithering principle~\cite{gray1993dithered,schuchman1964}, the channel is represented as an additive-noise channel whose effective noise is the sum of Gaussian and uniform components, with the input subject to joint peak-amplitude and average-power constraints. The peak-amplitude constraint arises from the finite dynamic range of the ADC and is determined by the number of reconstruction levels \(K\), the quantizer range \(\gamma\), and the overload tolerance \(\varepsilon\).

We derived an EPI-based lower bound on the capacity using a truncated Gaussian input, tighter numerical lower bounds using inputs with a finite number of mass points, and a variance-based upper bound.

Our numerical comparison of the above bounds shows that
the bounds are tight in the moderate-SNR regime, where the variance-based upper bound and the EPI lower bound thus provide simple approximate characterizations of capacity. In both the low- and high-SNR regimes, the EPI bound is strictly below the bounds obtained with a finite number of mass points, and the upper and lower bounds do not coincide. 

Another observation is that a discrete input constellation with \(N=K\) points approaches the best rates obtained among the tested constellations. At low SNR, a 2-point constellation performs nearly as well as the larger constellations considered, irrespective of the number of quantizer output levels \(K\).

Future work will focus on finding improved capacity upper bounds, for example using the duality framework in~\cite{lapidoth2003duality,lapidoth2009optical}, and on extending the analysis to multiple-antenna systems.




\section*{Acknowledgment}

This work was supported by the ERC under Grant Agreement 101125691.

\appendices

\section{Derivation of the Effective Noise Distribution}
\label{app:noise_dist}
Let \(\bar{W} = W + W_q\), where \(W \sim \mathcal{N}(0,\sigma^2)\) and 
\(W_q \sim \mathcal{U}(-\Delta/2,\Delta/2)\) are mutually independent.
\subsection{Probability Density Function}
The PDF of \(\bar{W}\) is given by the convolution
\begin{equation}
    f_{\bar{W}}(w)
    = \frac{1}{\Delta} \int_{-\Delta/2}^{\Delta/2}
    \frac{1}{\sigma}\,\phi\!\left(\frac{w - z}{\sigma}\right) dz,
\end{equation}
where \(\phi(\cdot)\) and \(Q(\cdot)\) denote the standard Gaussian PDF and the Gaussian \(Q\)-function, respectively. This evaluates to
\begin{equation}
    f_{\bar{W}}(w)
    = \frac{1}{\Delta}
    \left[
        Q\!\left(\frac{w - \frac{\Delta}{2}}{\sigma}\right)
        -
        Q\!\left(\frac{w + \frac{\Delta}{2}}{\sigma}\right)
    \right].
    \label{eq:fwbar_app}
\end{equation}
\subsection{Cumulative Distribution Function}
The CDF of \(\bar{W}\) is given by
\begin{align}
    F_{\bar W}(w)
    &=
    1+\frac{1}{\Delta}
    \Bigg[
    \left(w-\frac{\Delta}{2}\right)
    Q\!\left(\frac{w-\frac{\Delta}{2}}{\sigma}\right)
    \nonumber\\
    &\quad
    -
    \left(w+\frac{\Delta}{2}\right)
    Q\!\left(\frac{w+\frac{\Delta}{2}}{\sigma}\right)
    \nonumber\\
    &\quad
    +
    \sigma
    \left(
    \phi\!\left(\frac{w+\frac{\Delta}{2}}{\sigma}\right)
    -
    \phi\!\left(\frac{w-\frac{\Delta}{2}}{\sigma}\right)
    \right)
    \Bigg].
    \label{eq:Fwbar_app}
\end{align}

\section{Proof of Theorem~\ref{thm:epi_lb}}
\label{app:epi_lb}

By the entropy power inequality,
\begin{align}
I(X;Y) &= h(X+\bar W) - h(\bar W)\\
&\ge
\frac{1}{2}\log\!\left(e^{2h(X)} + e^{2h(\bar W)}\right) - h(\bar W).
\end{align}
Taking the supremum over \(P_X \in \mathcal{F}\) yields~\eqref{eq:capacity_LB_EPI_main} with \(h_{\max}(A,P_0)\) as defined in~\eqref{eq:hmax}.

It remains to evaluate \(h_{\max}\). By~\cite[Ch.~11]{coverthomas}, the maximum-entropy density under~\eqref{eq:peak_constraint} and~\eqref{eq:avg_constraint} has the form
\begin{equation}
f_X^\star(x) \propto e^{-\mu x^2}, \qquad |x|\le A,
\end{equation}
for some parameter \(\mu \ge 0\). If the power constraint is inactive \((\mu = 0)\), the density is uniform on \([-A,A]\) and \(h_{\max} = \log(2A)\), which is feasible when \(P_0 \ge A^2/3\). If it is active \((\mu > 0)\), the density is a truncated Gaussian of the form
\begin{equation}
    f_X^\star(x)
    =
    \frac{\sqrt{\mu^{\star}}}
    {\sqrt{\pi}\Bigl(1 - 2Q\!\bigl(\sqrt{2\mu^{\star}}\,A\bigr)\Bigr)}
    e^{-\mu^{\star} x^2},
    \qquad |x|\le A,
\end{equation}
and the parameter \(\mu^\star > 0\) is determined by \(\mathbb{E}[(X^\star)^2] = P_0\), which gives~\eqref{eq:mu_star_theorem} and determines \(\mu^\star\) uniquely. Together the two cases give~\eqref{eq:hmax2}.






\bibliographystyle{IEEEtran}
\bibliography{refs}
\vspace{12pt}

\end{document}